\DeclareSymbolFont{matha}{OML}{Txmi}{m}{it}
\DeclareMathSymbol{\varv}{\mathord}{matha}{118}
\begin{document}
	\title{
		Coverage Probability of STAR-RIS assisted Massive MIMO systems with   Correlation and Phase Errors}
	\author{Anastasios Papazafeiropoulos, Zaid Abdullah, Pandelis Kourtessis, Steven Kisseleff, Ioannis Krikidis
		\thanks{This work was co-funded by the European Regional Development Fund
			and the Republic of Cyprus through the Research and Innovation Foundation,
			under the projects INFRASTRUCTURES/1216/0017 (IRIDA) and EXCEL-
			LENCE/0918/0377 (PRIME). Also, this work was supported by the Luxembourg National Research Fund (FNR) through the CORE Project under Grant RISOTTI C20/IS/14773976.}
		\thanks{A. Papazafeiropoulos is with the Communications and Intelligent Systems Research Group, University of Hertfordshire, Hatfield AL10 9AB, U. K., and with SnT at the University of Luxembourg, Luxembourg.  	P. Kourtessis is with the Communications and Intelligent Systems Research Group, University of Hertfordshire, Hatfield AL10 9AB, U. K. Zaid Abdullah and Steven Kisseleff are  with the SnT at the University of Luxembourg, Luxembourg.  I. Krikidis is with the IRIDA Research Centre for Communication Technologies, Department of Electrical and Computer Engineering, University of Cyprus, Cyprus. E-mails: tapapazaf@gmail.com, \{zaid.abdullah, steven.kisseleff\}@uni.lu, p.kourtessis@herts.ac.uk,krikidis.ioannis@ucy.ac.cy. }}
	\maketitle\vspace{-1.7cm}
	\begin{abstract}
		In this paper, we investigate a  simultaneous transmitting and reflecting reconfigurable intelligent surface (STAR-RIS) assisting a massive multiple-input multiple-output (mMIMO) system. In particular, we derive a   closed-form expression for the coverage probability of a STAR-RIS assisted mMIMO system while accounting for correlated fading and phase-shift errors. Notably, the phase configuration takes place at every several coherence intervals  by optimizing the  coverage probability since the latter depends  on statistical channel state information (CSI) in terms of large-scale statistics. As a result, we achieve a reduced complexity and overhead for the optimization of passive beamforming, which are increased in the case of STAR-RIS networks with instantaneous CSI. Numerical results corroborate our analysis, shed light on interesting properties such as the impact of the number of RIS elements and the effect of phase errors, along with affirming the superiority of STAR-RIS against reflective-only RIS.
	\end{abstract}
	\begin{keywords}
		Reconfigurable intelligent surface (RIS), simultaneous transmission and reflection, spatial  correlation,  coverage probability, 6G networks.
	\end{keywords}
	\section{Introduction}
	Reconfigurable intelligent surface (RIS) is a promising technology, enabling a smart radio environment (SRE) to meet the challenges of sixth-generation (6G) wireless networks \cite{DiRenzo2020}. It consists of a number of low-cost  nearly passive elements that can modify the phase shifts and even the amplitude of the impinging signals to realize an SRE through a backhaul controller.  RIS favorable characteristics have led recently the research interest in meeting certain significant targets such as maximization of the spectral and energy efficiencies \cite{Papazafeiropoulos2021,Huang2019},   and maximization of the coverage probability \cite{Guo2020,Papazafeiropoulos2021a}. Notably, given that 6G will be built upon existing technologies in 5G such as massive multiple-input multiple-output (mMIMO) systems, the latter suggests a promising architecture for amalgamation with RIS \cite{Zhi2021}.
	
	Thus far, there has been a considerable amount of research on RISs. However, most of the existing works in the literature assume independent Rayleigh fading such as \cite{Wu2019a,Huang2019}. In contrast, it was shown in  \cite{Bjoernson2020}, it was shown that RIS correlation should be taken into account in practice. On this ground, several recent research efforts have considered RIS correlation in their analysis \cite{Papazafeiropoulos2021a,Papazafeiropoulos2022}. Furthermore, aiming at realistic modeling, many works have taken into account phase-shift errors coming from the finite precision of  phase-shifts configuration   \cite{Badiu2019,Papazafeiropoulos2021}.

	Recent advancements on metasurfaces have brought to the forefront the concept of simultaneous transmitting and reflecting RISs (STAR-RISs). In particular, the STAR-RIS can provide a full space coverage by not only reflecting in the half space, but also refracting to space behind the RIS \cite{Xu2021,Mu2021a}.  In particular, in \cite{Xu2021}, a general 	hardware model and two-channel models for the near-field
	region and the far-field region of STAR-RIS have been presented while showing that their diversity gain and coverage are greater than conventional RIS (i.e. reflective-only RIS) assisted systems. Moreover, in \cite{Mu2021a}, three operating protocols for STAR-RIS were suggested, which are known as energy splitting (ES), mode switching (MS), and time switching (TS). 
	However, none of the existing works on STAR-RIS have  considered correlated fading, phase errors, and mMIMO systems.
	
	Against the above background, we provide in closed form the only work obtaining the coverage probability for STAR-RIS assisted mMIMO systems with practical channel and phase models.  In particular, contrary to existing works on STAR-RIS such as\cite{Xu2021,Mu2021a}, we formulate a model that embodies correlated fading and phase errors to identify the realistic prospects of STAR-RIS before its final implementation. In this respect, we derive the coverage probability for mMIMO systems for both ES and MS protocols. Especially, compared to \cite{Xu2021}, which assumed a single-antenna transmitter, and required no special phase-shifts optimization,   we also consider a large number of antennas at the  base station (BS). For this reason,   we follow the methodology in \cite{Papazafeiropoulos2021,Papazafeiropoulos2021a} based on statistical  channel state information (CSI) to optimize the passive (reflecting and refracting) beamforming matrix (PBM) of each user equipment (UE) at every several coherence intervals, which brings lower overhead compared to optimizations relying on instantaneous CSI. To the best of our knowledge, this is the only work on STAR-RIS, where both beamforming matrices for reflection and refraction are optimized based on  statistical CSI, and thereby reducing the amount of required overhead, which can be considered as one of the main challenges  for STAR-RIS.
	
	\textit{Notation}: Vectors and matrices are described by boldface lower and upper case symbols, respectively. The notations $(\cdot)^\T$, $(\cdot)^\H$, and $\tr\!\left( {\cdot} \right)$ denote the transpose, Hermitian transpose, and trace operators, respectively. Moreover, the notations $ \arg\left(\cdot\right) $,  $\EE\left[\cdot\right]$,  and $ \mathrm{Var}(\cdot) $ denote the argument function,  the expectation, and variance operators, respectively. The notation  $\diag\left(\bA\right) $ denotes a vector with elements equal to the  diagonal elements of $ \bA $, while  $\bb \sim \cC\cN{(\b0,\mathbf{\Sigma})}$ denotes a circularly symmetric complex Gaussian vector with zero mean and covariance matrix $\mathbf{\Sigma}$. 
	\section{Signal and System Models}	
	This section presents the signal and system models of the STAR-RIS assisted system.
	\subsection{Signal Model}
	Regarding the description of the signal model of the STAR-RIS, let $ s_{n} $ describe the incident signal on element $ n\in  \mathcal{N} $, where $ \mathcal{N}=\{1, \ldots, N\} $  is the set of RIS  elements. Two independent coefficients, denoted as the transmission and the 	reflection coefficients, configure the transmitted 	and reflected signals in respective modes. Especially, the transmitted ($ t $) and reflected ($ r $) signals by the $ n $th element can be modelled as $ t_{n} =(\sqrt{\beta_{n}^{t}}e^{j \phi_{n}^{t}})s_{n}$ and $ r_{n}=(\sqrt{\beta_{n}^{r}}e^{j \phi_{n}^{r}})s_{n} $, respectively, where $ {\beta_{n}^{k}}\in [0,1] $ and $ \phi_{n}^{k} \in [0,2\pi)$  express the independent amplitude and phase-shift response of the $ n $th element, and $  k\in\{t,r\} $ corresponds to the UE found in the transmission ($ t $) or reflection  ($ r $) region 
	\cite{Xu2021}.\footnote{In practical applications, the amplitude and phase-shifts are correlated and result in a performance loss. The study of this correlation is  left for future work.} 
	Note that the  choice of  $ \phi_{n}^{t} $ and $ \phi_{n}^{r} $ is independent from each other, but the amplitude adjustments are correlated based on the law of energy conservation as
	\begin{align}
		\beta_{n}^{t}+\beta_{n}^{r}=1,  \forall n \in \mathcal{N}.
	\end{align}
	\textbf{\textit{Operation Protocols}}:
	Herein, we present briefly the main points of the ES/MS protocols  \cite{Mu2021a}. The study of the TS protocol is left for future work.
	
	\textit{{ES protocol:}} All RIS elements serve simultaneously  $ t $ and $ r $ UEs,  and the corresponding PBM for $ k\in\{t,r\}
	$ is expressed as $ \bPhi_{k}^{\mathrm{ES}}=\diag(\sqrt{\beta_{1}^{k}}e^{j \phi_{1}^{k}}, \ldots, \sqrt{\beta_{N}^{k}}e^{j \phi_{N}^{k}}) \in \mathbb{C}^{N\times N}$, where $ \beta_{n}^{t} , \beta_{n}^{r} \in [0,1]$, $ 	\beta_{n}^{t}+\beta_{n}^{r}=1 $, and $ \phi_{n}^{t}, \phi_{n}^{r}\in [0, 2\pi ), \forall n \in \mathcal{N} $.
	
	\textit{{MS protocol:}} The elements are divided into two groups of $ N_{t} $ and $ N_{r} $ elements serving UE  $  t$ or $ r $, respectively, i.e., $ N_{t}+N_{r}=N $. In such case, the PBM for $  k\in\{t,r\} $
	is  $ \bPhi_{k}^{\mathrm{MS}}=\diag(\sqrt{\beta_{1}^{k}}e^{j \phi_{1}^{k}}, \ldots, \sqrt{\beta_{N}^{t}}e^{j \phi_{N}^{k}}) \in \mathbb{C}^{N\times N}$, where $ \beta_{n}^{t} , \beta_{n}^{r} \in \{0,1\}$, $ 	\beta_{n}^{t}+\beta_{n}^{r}=1 $, and $ \phi_{n}^{t}, \phi_{n}^{r}\in [0, 2\pi ), \forall n \in \mathcal{N} $. Basically, this protocol can be considered as a special case of ES by restricting the amplitude coefficients for transmission and reflection to binary values. Thus, the MS protocol cannot provide the full-dimension transmission and reflection beamforming gain as ES, however, it requires less computational complexity in terms of PBM design.
	
	\subsection{System Model}
	We consider a STAR-RIS-assisted mMIMO communication system, where a BS with a large number of antennas, denoted by $ M $, communicates with $ 2 $ single-antenna UEs, as shown in Fig \ref{Fig0}.\footnote{This model can be easily extended to the multi-user scenario as in typical mMIMO, where all UEs are divided into multiple groups of  two UEs located at the opposite sides of RIS.}
Especially,  the incident signals on the RIS are divided simultaneously into transmitted and reflected signals to create  full-space coverage that are received by UE $ t $  located behind the RIS, and UE $r $  located in front of the RIS, i.e., the BS and UE $ r $ are found at the same side. Notably, we assume that the UEs operate at orthogonal frequency bands of equal sizes similar to \cite{Xu2021}. 
Also, we account for no direct links between the BS and the users due to blockages, which is one of the main challenging practical scenarios that justifies the use of a RIS. The phase shifts of both transmitted and reflected signals are adjusted by a controller that exchanges information with the  BS through a backhaul link.  We rely on perfect CSI, which means that the results play the role of upper bounds of practical scenarios with imperfect CSI. This is a common assumption on works studying coverage to allow more direct mathematical manipulations and to focus on the transmission and reflection beamforming gain of STAR-RIS, 
\begin{figure}[!h]
	\begin{center}
		\includegraphics[width=0.65\linewidth]{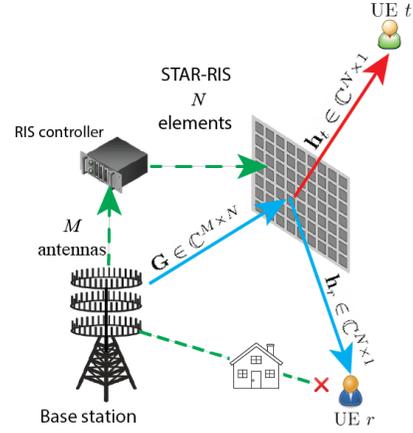}
		\caption{\footnotesize{A STAR-RIS assisted mMIMO model. }}
		\label{Fig0}
	\end{center}
\end{figure} 	

We assume narrow-band quasi-static fading channels. Specifically, $ \bh_{k} \in \mathbb{C}^{N \times 1}$
describes the channel between the RIS and UE $ k\in \{t,r\} $ while  $\bG=[\bg_{1}\ldots,\bg_{N} ] \in \mathbb{C}^{M \times N}$, expresses the LoS channel between the BS and RIS  with $ \bg_{i}, $  $ i=1 \in \mathcal{N} $ corresponds to the $ i $th column vector of $ \bG$. In particular, the $ (m,n) $th entry of $ \bG$ can be expressed as
\begin{align}
	[\bG]_{m,n}&=\sqrt{\beta_{g}} \exp \Big(j \frac{2 \pi }{\mu}\left(m-1\right)d_{\mathrm{BS}}\sin \theta_{1,n}\sin \psi_{1,n}\nn\\
	&+\left(n-1\right)d_{\mathrm{RIS}}\sin \theta_{2,m}\sin \psi_{2,m}\Big).
\end{align}
Note that $ \bG$ could be obtained by several techniques, e.g., see \cite{Bohagen2007}. The parameters  $ \beta_{g} $ and $ \lambda $ are the path-loss of the BS-RIS link and the carrier wavelength. Also, $ d_{\mathrm{BS}} $ and $ d_{\mathrm{RIS}} $ are the inter-antenna separation at the BS and inter-element separation at the RIS, respectively \cite{Papazafeiropoulos2021}. In addition, $ \theta_{1,n} $, $ \psi_{1,n} $  and  $ \theta_{2,m} $, $ \psi_{2,m} $ describe the  elevation and azimuth LoS angles of departure  at the BS and  arrival  at the RIS  with respect to RIS element $ n $, respectively. 

\subsubsection{RIS correlation}
Instead of assuming independent Rayleigh fading as in existing STAR-RIS works \cite{Xu2021,Mu2021a}, we consider spatial correlation, appearing in practice \cite{Bjoernson2020}.\footnote{The consideration of Rician fading, which includes an LoS component, is the topic of future work.} Hence, the  channel vector between the RIS and UE $ k $  with $ k\in \{t,r\} $ can be written  as
\begin{align}
	\bh_{k}&=\sqrt{\beta_{k}}\bR_{\mathrm{RIS}}^{1/2}\bz_{k},
\end{align}
where  $ \beta_{k} $ 
expresses the path-loss,  and $ \bR_{\mathrm{RIS}} \in \mathbb{C}^{N \times N} $ with $ \tr\left(\bR_{\mathrm{RIS}} \right)=N $ expresses the deterministic Hermitian-symmetric positive semi-definite correlation
matrix at the RIS. In particular, if  $N=N_{\mathrm{H}}N_{\mathrm{V}}  $ elements with $ N_{\mathrm{H}} $ the number of elements per row and $ N_{\mathrm{V}} $ the number of elements per column, while $ d_\mathrm{H} $ and $ d_\mathrm{V} $ are horizontal and vertical dimensions of each elements, the $ (i,j) $th element of the RIS correlation is given by \begin{align} \label{eq:Element}
r_{ij} = d_{\mathrm{H}} d_{\mathrm{V}} \mathrm{sinc} \left( 2 \|\mathbf{u}_{i} - \mathbf{u}_{j} \|/\lambda\right),
\end{align}
where $\mathbf{u}_{\epsilon} = [0, \mod(\epsilon-1, N_\mathrm{H})d_\mathrm{H}, \lfloor (\epsilon-1)/N_\mathrm{H} \rfloor d_\mathrm{V}]^\T$, $\epsilon \in \{i,j\}$.

Note that the path-losses and the correlation matrices are assumed known after being obtained by practical methods, {e.g., see} \cite{Neumann2018}. Moreover,  $ \bz_{k}\sim \mathcal{CN}\left(\b0,\Id_{N}\right) $ 
describes the independent and identically distributed (i.i.d.)  fast-fading vector in the $ k\mathrm{th} $ link.
\subsubsection{Phase-shift errors}
The configuration of the RIS elements with infinite precision is not possible in practice. Hence, phase errors appear, which  result from imperfections in phase-estimation and/or phase quantization that cannot be avoided \cite{Badiu2019}. Specifically, we denote $ \tilde{\phi}_{n}^{k}$, with $n \in \mathcal{N} $, $ k\in\{t,r\} $ the  i.i.d. randomly distributed  phase error in  $ [-\pi, \pi) $ of the $ n $th phase-shift aiming at the $ k $th UE. Hence, the  random phase error matrix corresponding to the PBM $  \bPhi_{k} $  is diagonal and described by $\widetilde{\bPhi}_{k} =\diag\left( e^{j \tilde{\phi}_{1}}, \ldots, e^{j \tilde{\phi}_{N}} \right)\in\mathbb{C}^{N\times N}$. 	The probability density function (PDF) of $\tilde{\phi}_{n}^{k} $ is assumed symmetric with its mean direction equal to zero, i.e., $ \arg\left(\EE[\mathrm{e}^{j\tilde{\phi}_{n}^{k}}]\right)=0 $ \cite{Badiu2019}.

In general, the Uniform and the Von Mises distributions are the main PDFs that can describe the phase noise affecting the  RIS \cite{Badiu2019}. Notably, the Uniform PDF of phase-noise has a  characteristic function (CF) equal to zero, and thus it cannot provide any knowledge regarding the phase-estimation accuracy. 
In contrast,   the Von Mises PDF  has a zero-mean and concentration parameter $ \kappa_{\tilde{\phi}} $, which captures the accuracy of the estimation. In particular, the CF of the Von Mises noise  CF is $ m=\frac{\mathrm{I}_{1}\!\left(\kappa_{\tilde{\phi}}\right)}{\mathrm{I}_{0}\!\left(\kappa_{\tilde{\phi}}\right)}$, where $ \mathrm{I}_{v}\!\left(\kappa_{\tilde{\phi}}\right)$ is the modified Bessel function of the first kind and order 	$ v $.

The  received signal by UE $ k\in \{t,r\} $   through the STAR-RIS assisted network is described by
\begin{align}
	y_{k}=	\sqrt{\rho_{\mathrm{dl}}}\bh_{k}^{\H}\widetilde{\bPhi}_{k}^{\H,\X}	\bPhi_{k}^{\H,\X}\bG^{\H} \bff_{k}q_{k}+z_{k},\label{received}
\end{align}
where $ \rho_{\mathrm{dl}}\ge 0 $ is the  transmit power at BS to  UE $ k $, and $ q_{k} $ is the data symbol with $ \EE[|q_{k}|^{2}]=1 $, while $ \bff_{k}\in \mathcal{C}^{M \times 1} $ is the linear precoding vector, and $ z_{k}\sim \mathcal{CN}(0, N_{0}) $ is the additive white Gaussian noise with zero mean and variance of  $ N_{0} $.

Notably, the cascaded channel vector including the phase-shift errors between the large BS and UE $ k\in\{t,r\} $ can be expressed as $ \bar{\bh}_{k}=\bG \bPhi_{k}^{\X} \widetilde{\bPhi}_{k}^{\X}	\bh_{k} \in \mathbb{C}^{M\times 1} $, where $ \mathcal{X}=\mathrm{ES},\mathrm{MS} $. It is distributed as $ \mathcal{CN}(\b0, \bar{\bR}_{k}) $, where $ \bar{\bR}_{k}= \beta_{k}\bG  \bPhi_{k}^{\X}\tilde{\bR}_{\mathrm{RIS}}^{\X} \bPhi_{k}^{\H,\X}\bG^{\H}$. Note that $ \tilde{\bR}_{\mathrm{RIS}}^{\X}=\EE[ \bPhi_{k}^{\X}\bR_{\mathrm{RIS}}\bPhi_{k}^{\H,\X}]= m^{2}\bR_{\mathrm{RIS}}+\left(1-m^{2}\right)\Id_{N}$, where $ m $ denotes the CF of the phase error \cite[Eq. 13]{Papazafeiropoulos2021}. In the case of the Uniform distribution, i.e, $  m=0 $,  $ \bR_{k} $ does not depend on the phase-shifts and  RIS cannot be optimized. A similar observation is met when RIS optimization is based on statistical CSI \cite{Papazafeiropoulos2021}. Hence, apart from its practical meaning, RIS correlation should be taken into account to exploit RIS.

\section{Main Results}\label{MainResults}
This section presents  the signal-to-noise ratio (SNR), the coverage probability, and its optimization for STAR-RIS assisted systems. Generally, the coverage probability for $ k$th  UE, $ {P}_{\mathrm{c}}^{k} $  with $ k\in \{t,r\} $, is  the probability that the received SNR is greater than a threshold  $ T $, i.e., it is described by $ {P}_{\mathrm{c}}^{k}=\mathrm{Pr}\left(\gamma_{k}>T\right) $, where $ \mathrm{Pr}(\cdot) $ denotes probability.


\subsection{SNR}	
First, we take advantage of channel hardening,  because UEs do not have any knowledge of the instantaneous CSI in practice, but they are aware of their statistics \cite{Bjoernson2017}. Note that  channel hardening appears in mMIMO systems as the number of BS antennas increases. Hence, by resorting  to the application of the use-and-then-forget bounding technique \cite{Bjoernson2017}, we write \eqref{received} as
\begin{align}
	y_{k}&=	\sqrt{\rho_{\mathrm{dl}}}\big(\EE[\bar{\bh}_{k}^{\H}\bff_{k}]q_{k}+\bar{\bh}_{k}^{\H}\bff_{k}q_{k}-	\EE[\bar{\bh}_{k}^{\H}\bff_{k}]q_{k}\big)+z_{k}.\label{received1}
\end{align}

Next, by assuming that \eqref{received1} represents a single-input single-output (SISO) system, where the BS treats the unknown terms as uncorrelated additive noise, the achievable SNR of the link between the BS and UE  $ k\in \{t,r\} $ is written as 
\begin{align}
	\gamma_{k} =\frac{\big|\EE[\bar{\bh}_{k}^{\H}\bff_{k}]
		\big|^{2}}{\EE[\big|\bar{\bh}_{k}^{\H}\bff_{k}
		\big|^{2}]-\big|\EE[\bar{\bh}_{k}^{\H}\bff_{k}]
		\big|^{2}+{\sigma_{0}}{}},\label{SNR}
\end{align}
where $ \sigma_{0}=N_{0}/\rho_{\mathrm{dl}} $ \cite{Bjoernson2017}.
Regarding the selection of precoding, we apply the simple  maximum ratio transmission (MRT) precoding because it allows the derivation of closed-from expressions and extraction of fundamental properties together with its optimality for the single UE case, i.e., $ \bff_{k}= \frac{\bar{\bh}_{k} }{ \sqrt{\EE[\|\bar{\bh}_{k}\|^{2}]}}$. \footnote{Another common choice could be the more robust regularized zero-forcing (RZF) precoder, but that could be applicable in a multi-user scenario with interference, and it  would result in  intractable expressions. Hence, the application of RZF could be the topic of future work.}
\begin{proposition}\label{proposition:SNR}
	The downlink SNR of a STAR-RIS assisted mMIMO system with correlated Rayleigh fading and phase-shift errors at UE $ k\in\{t,r\} $ 
	is given by
	\begin{align}
		&\!\!\!	\gamma_{k}\!=\! \frac{\tr^{2}( \bar{\bR}_{k})}{\tr( \bar{\bR}_{k}^{2})+{\sigma_{0}}\tr( \bar{\bR}_{k})}
		\label{DE_SNR} .
	\end{align}
\end{proposition}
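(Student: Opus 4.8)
The plan is to substitute the MRT precoder $\bff_k = \bar{\bh}_k / \sqrt{\EE[\|\bar{\bh}_k\|^2]}$ directly into the SNR expression \eqref{SNR} and evaluate each of the three resulting expectations using the fact that $\bar{\bh}_k \sim \mathcal{CN}(\b0, \bar{\bR}_k)$. With this precoder the effective channel gain becomes $\bar{\bh}_k^{\H}\bff_k = \|\bar{\bh}_k\|^2 / \sqrt{\EE[\|\bar{\bh}_k\|^2]}$, so the whole computation reduces to finding the first two moments of the quadratic form $\|\bar{\bh}_k\|^2 = \bar{\bh}_k^{\H}\bar{\bh}_k$.

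First I would handle the numerator. Taking the expectation gives $\EE[\bar{\bh}_k^{\H}\bff_k] = \sqrt{\EE[\|\bar{\bh}_k\|^2]}$, hence $|\EE[\bar{\bh}_k^{\H}\bff_k]|^2 = \EE[\|\bar{\bh}_k\|^2] = \tr(\bar{\bR}_k)$, where the last equality is the standard identity $\EE[\bar{\bh}_k^{\H}\bar{\bh}_k] = \tr(\bar{\bR}_k)$ for a zero-mean Gaussian vector. This already supplies the factor $\tr^2(\bar{\bR}_k)$ appearing in the numerator of \eqref{DE_SNR} once the ratio is formed.

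The main work is the second-moment term $\EE[|\bar{\bh}_k^{\H}\bff_k|^2] = \EE[\|\bar{\bh}_k\|^4]/\EE[\|\bar{\bh}_k\|^2]$, which requires the fourth-order moment of a correlated complex Gaussian vector. The key identity here is $\EE[\|\bar{\bh}_k\|^4] = \tr^2(\bar{\bR}_k) + \tr(\bar{\bR}_k^2)$, which follows from the quartic Isserlis/Wick rule for circularly symmetric complex Gaussians (equivalently, from $\mathrm{Var}(\bar{\bh}_k^{\H}\bar{\bh}_k) = \tr(\bar{\bR}_k^2)$). I expect this moment evaluation to be the only nontrivial step; everything else is bookkeeping.

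Finally I would assemble the denominator of \eqref{SNR}. Substituting the two moments, the two $\tr^2(\bar{\bR}_k)$ contributions cancel: $\EE[|\bar{\bh}_k^{\H}\bff_k|^2] - |\EE[\bar{\bh}_k^{\H}\bff_k]|^2 = (\tr^2(\bar{\bR}_k)+\tr(\bar{\bR}_k^2))/\tr(\bar{\bR}_k) - \tr(\bar{\bR}_k) = \tr(\bar{\bR}_k^2)/\tr(\bar{\bR}_k)$, so the denominator becomes $\tr(\bar{\bR}_k^2)/\tr(\bar{\bR}_k) + \sigma_0$. Dividing the numerator $\tr(\bar{\bR}_k)$ by this and clearing the common factor $\tr(\bar{\bR}_k)$ yields exactly \eqref{DE_SNR}.
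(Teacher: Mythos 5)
Your proposal is correct and follows essentially the same route as the paper's own proof: substitute the MRT precoder, identify the numerator as $\tr(\bar{\bR}_k)$, and evaluate $\EE\big[|\bar{\bh}_k^{\H}\bff_k|^2\big]$ via the Gaussian fourth-moment identity $\EE\big[\|\bar{\bh}_k\|^4\big]=\tr^2(\bar{\bR}_k)+\tr(\bar{\bR}_k^2)$, which is exactly the content of the lemma the paper cites (\cite[Lem.~B.14]{Bjoernson2017}), before substituting back into \eqref{SNR}. The only cosmetic difference is that you derive the quartic moment from the Wick/Isserlis rule rather than citing it, which is a fully equivalent justification.
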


\begin{proof}
	See Appendix~\ref{DESNR}.
\end{proof}

\begin{remark}
	As can be seen, the  SNR in \eqref{DE_SNR} depends only on the PBM and the statistical CSI by means of the correlation matrix, the path-losses, and the phase-shift errors. Also, there is a dependence on the location of UE $ k $, i.e., if it is found behind or in front of the RIS. 
\end{remark}

\subsection{Coverage Probability}	\label{coverage1}
Having obtained the SNR, the following proposition provides the coverage probability.

\begin{proposition}\label{coverage}
	The coverage probability of a mMIMO STAR-RIS assisted system, accounting for  RIS correlation and  phase-shift errors of UE $ k\in\{t,r\} $, is tightly approximated as
	\begin{align}
		{P}_{\mathrm{c}}^{k}&\approx \sum^{L}_{n=1}\binom{L}{n}(-1)^{n+1}e^{-\eta \frac{T}{\gamma_{k}}}\label{PC},
	\end{align} 
	where  $\eta=L\left(L! \right)^{-\frac{1}{L}}$ with $ L $ being an approximation parameter.
\end{proposition}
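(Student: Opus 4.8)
The quantity $\gamma_{k}$ in \eqref{DE_SNR} is the deterministic effective SNR produced by the use-and-then-forget bound, so the coverage event must be phrased in terms of the \emph{instantaneous} SNR, which fluctuates through the cascaded channel $\bar{\bh}_{k}\sim\mathcal{CN}(\b0,\bar{\bR}_{k})$. Writing the instantaneous received SNR as a normalized version of the quadratic form $\bar{\bh}_{k}^{\H}\bar{\bh}_{k}$, i.e.\ a weighted sum of independent unit-mean exponential variables whose weights are the eigenvalues of $\bar{\bR}_{k}$, the first step is to approximate its law by a Gamma distribution. Matching the first two moments, $\EE[\bar{\bh}_{k}^{\H}\bar{\bh}_{k}]=\tr(\bar{\bR}_{k})$ and $\mathrm{Var}(\bar{\bh}_{k}^{\H}\bar{\bh}_{k})=\tr(\bar{\bR}_{k}^{2})$, fixes the scale and shape of the fit; after absorbing the precoding and noise normalization one identifies the mean of the resulting Gamma variable with the deterministic $\gamma_{k}$ of Proposition~\ref{proposition:SNR}, while the integer shape $L$ plays the role of a free accuracy parameter governing the sharpness of channel hardening.

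Granting this model, I would write ${P}_{\mathrm{c}}^{k}=\mathrm{Pr}(\gamma_{k}^{\mathrm{inst}}>T)=1-F(T)$, where $F$ is the Gamma CDF, i.e.\ the regularized lower incomplete gamma function $\gamma(L,\cdot)/\Gamma(L)$ evaluated at $LT/\gamma_{k}$ (the rate $L/\gamma_{k}$ being the reciprocal scale of a Gamma variable of mean $\gamma_{k}$ and shape $L$). The crux is then to replace this special function by an elementary expression via Alzer's inequality,
\begin{align}
\frac{\gamma(L,x)}{\Gamma(L)}\;\approx\;\big(1-e^{-(L!)^{-1/L}\,x}\big)^{L},\nn
\end{align}
whose constant $(L!)^{-1/L}$ is pinned down by matching the common $x^{L}/L!$ leading behaviour of the two sides as $x\to 0$. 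Substituting $x=LT/\gamma_{k}$ merges the two constants into $\eta=L(L!)^{-1/L}$ and gives $F(T)\approx\big(1-e^{-\eta T/\gamma_{k}}\big)^{L}$.

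The final step is purely algebraic: expanding $1-\big(1-e^{-\eta T/\gamma_{k}}\big)^{L}$ by the binomial theorem, using $\big(1-e^{-\eta T/\gamma_{k}}\big)^{L}=\sum_{n=0}^{L}\binom{L}{n}(-1)^{n}e^{-n\eta T/\gamma_{k}}$ and discarding the $n=0$ term, reproduces the sum in \eqref{PC}. I expect the genuine obstacle to be the first step rather than the algebra: justifying that the instantaneous SNR is well approximated by a Gamma law and, in particular, pinning the scale so that exactly $\gamma_{k}$ — which carries the noise term $\sigma_{0}\tr(\bar{\bR}_{k})$ in its denominator — appears in the exponent, rather than the bare channel-gain scale $\tr(\bar{\bR}_{k})$. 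Since the claimed tightness rests entirely on the quality of both the Gamma fit and Alzer's bound, I would close by verifying \eqref{PC} numerically against the exact weighted-exponential CDF for representative $\bar{\bR}_{k}$.
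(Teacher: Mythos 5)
Your tail-end machinery is exactly the paper's: Alzer's inequality with $\eta=L\left(L!\right)^{-1/L}$ applied to the CDF of a normalized Gamma variable, followed by the binomial expansion. (Your expansion correctly produces the factor $n$ in the exponent, $e^{-n\eta T/\gamma_{k}}$; the missing $n$ in \eqref{PC} is a typo the paper itself carries, as confirmed by the $n\eta$ appearing in \eqref{lemm1}.) The gap is in your first step. The paper never models an instantaneous SNR at all: Proposition~\ref{proposition:SNR} already defines $\gamma_{k}$ as the deterministic, hardened SNR produced by the use-and-then-forget bound, so $\Pr\left(\gamma_{k}>T\right)$ is a degenerate indicator $\mathds{1}\left(1>T/\gamma_{k}\right)$. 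The technique of \cite{Bai2015} is to replace the constant $1$ by a \emph{dummy} Gamma random variable $\tilde{g}$ with unit mean and shape parameter $L$, whose law concentrates at $1$ as $L$ grows; hence ${P}_{\mathrm{c}}^{k}\approx\Pr\left(\tilde{g}>T/\gamma_{k}\right)$, and the tightness improves with $L$ by construction. The randomness is a purely analytical artifice, not channel fading.

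Your route --- Gamma-fitting the law of the instantaneous SNR driven by $\bar{\bh}_{k}$ --- cannot deliver the proposition as stated, for the two reasons you yourself flag but do not resolve. First, moment matching pins the shape parameter to $\tr^{2}(\bar{\bR}_{k})/\tr(\bar{\bR}_{k}^{2})$, which contradicts treating $L$ as a free accuracy parameter whose increase tightens the approximation; under your model, pushing $L$ beyond the matched value degrades the fit rather than improving it, whereas the proposition explicitly asserts tightness increasing in $L$. Second, the mean of the instantaneous SNR is not $\gamma_{k}$: the quantity in \eqref{DE_SNR} carries the channel-fluctuation term $\tr(\bar{\bR}_{k}^{2})$ and the noise term $\sigma_{0}\tr(\bar{\bR}_{k})$ in its denominator, so the identification of your Gamma scale with $\gamma_{k}$ is exactly the step you admit you cannot pin down. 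In the paper's framing both issues evaporate, because $\gamma_{k}$ enters only as a deterministic constant inside an indicator. In short: right machinery in the second half, but the Gamma variable has the wrong provenance in the first, and that is where the proof as proposed fails.
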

\begin{proof}
	See Appendix~\ref{coverageProof}.
\end{proof}

The coverage depends  on the threshold, the number $ L $ defining the tightness of the approximation, and of course, the downlink SNR with its involved parameters. 

\textit{Conventional RIS:} We  assume a smart surface that consists of
transmitting-only or reflective-only elements, each with $ N/2 $ elements with $ N $ even for  simplicity. Also, $ \beta_{n}^{t} =1$ or $ \beta_{n}^{r} =1, \forall n$, respectively. In Sec. \ref{Numerical}, we compare the performance of
the STAR-RIS with the conventional RIS and illustrate the advantage of adopting  STAR-RIS in wireless systems.

\subsection{Passive beamforming matrix optimization}\label{RISbeamformingmatrixoptimization}
A STAR-RIS assisted system serves  two regions simultaneously, where UEs  $ t $ and $ r $ are found. We focus on the optimization of the total coverage probability given by $ {P}_{\mathrm{c}}^{t} +{P}_{\mathrm{c}}^{r} $. Given the difficulty in simultaneous optimization of both $ {P}_{\mathrm{c}}^{t} $ and $ {P}_{\mathrm{c}}^{r} $, we rely on alternating optimization by optimizing first each of them with respect to its PBM while fixing the other in an iterative manner until reaching the convergence, i.e., we perform optimization of $ {P}_{\mathrm{c}}^{k} $ with respect to $ \bPhi_{k}^{\X} $. Each optimization is achieved in terms of the  projected gradient 	ascent until converging to a 	stationary point.

Hence, based on the common assumption of infinite-resolution phase-shifters, the maximization algorithm for $ {P}_{\mathrm{c}}^{k} $, $ k\in\{t,r\} $ UE with respect to the PBM is formulated as
\begin{align}\begin{split}
		&\!\!\!(\mathcal{P}1)~\max_{\bPhi_{k}^{\X}} ~~{P}_{\mathrm{c}}^{k}\\
		&~~~~~\mathrm{s.t}~~~~\beta_{n}^{t}+\beta_{n}^{r}=1,  \forall n \in \mathcal{N}\\
		&~~~~~~~~~~~~\beta_{n}^{t}, \beta_{n}^{r}\in [0, 1],\phi_{n}^{t},\phi_{n}^{r}\in [0,2\pi),~~  \forall n \in \mathcal{N}.
	\end{split}\label{Maximization} 
\end{align}
By replacing $  [0, 1] $ with $ \{ 0, 1\} $, $ (\mathcal{P}1) $ can be applied  to the MS scheme.

The optimization problem $ 	(\mathcal{P}1) $ is non-convex regarding $ \phi_{n}^{k} $. The dependence on $ \phi_{n}^{k} $ is found on the covariance matrices $ \bar{\bR}_{k}   $.  Given that it is a constrained maximization problem, we resort to  the  projected gradient 	ascent (PGA) until convergence to a 	stationary point to provide its solution \cite{Papazafeiropoulos2021}. The transmit power constraint guarantees its convergence. It is worthwhile to mention that the complexity of $ 	{P}_{\mathrm{c}}^{k} $ is  $ \mathcal{O}\left(G(MN^{2}+M^3+L)\right) $. As can be seen, it is  a function of the fundamental system parameters  $ M $, $ N $, and $L $ with the number of BS antennas having the higher impact. Since it has  to be performed twice for each  $ k\in\{t,r\} $, the complexity is double, while, in the case of the conventional RIS scenario, the complexity is  half of the STAR-RIS setting.

Based on PGA, let $ \bs_{k,l} =[\phi_{1}^{k,l}, \ldots, \phi_{N}^{k,l}]^{\T}$ denote 	the  vector including the phases at step $ l $, while the next iteration point  provides the increase of $ {P}_{\mathrm{c}}^{k} $ upon its convergence by 
projecting the solution onto the closest feasible point as specified by $ \min_{\phi_{n}^{k} \in [0,2\pi), }\|\bs_{k}-\tilde{\bs}_{k}\|^{2} $. The next iteration  is described by
\begin{align}
	\tilde{\bs}_{k,l+1}&=\bs_{k,l}+\mu \bv_{k, l},\label{sol1}\\
	\bs_{k,l+1}^{w}&=\beta_{n}^{t}\exp\left(j \arg \left(\tilde{\bs}_{k,l+1}\right)\right),\label{sol2}
\end{align}
where the parameter $ \mu $ is the step size obtained at each iteration through the backtracking line search \cite{Boyd2004}. Moreover,  $ \bv_{k,l} $ is the  ascent direction at step $ l $, which means  $ \bv_{k,l}= \pdv{	{P}_{\mathrm{c}}^{k}}{\bs_{k,l}^{*}} $. The following lemma provides this derivative.

\begin{lemma}\label{deriv1}
	The derivative of the coverage probability $ {P}_{\mathrm{c}}^{k} $ for UE $ k\in\{t,r\} $  with respect to $ \bs_{k,l}^{*}$ is given by \eqref{lemm1} at the top of the next page, where 
	$ \gamma_{k} $ is given by \eqref{DE_SNR}, and
	\begin{align}
		S_{k}&=\tr^{2}( \bar{\bR}_{k}), \\ 
		I_{k}&=\tr( \bar{\bR}_{k}^{2})+{\sigma_{0}}\tr( \bar{\bR}_{k}), \\
		\pdv{S_{k}}{\bs_{k,l}^{*}}&=2\beta_{k}\tr( \bar{\bR}_{k})\diag(\bG^{\H}\bG  \bPhi_{k}^{\X}\tilde{\bR}_{\mathrm{RIS}}^{\X} ),\\ \pdv{I_{k}}{\bs_{k,l}^{*}}&=\beta_{k}\tr(\bG^{\H}\bar{\bR}_{k}\bG  \bPhi_{k}^{\X}\tilde{\bR}_{\mathrm{RIS}}^{\X})\nn\\
		&	+{\sigma_{0}}\beta_{k}\tr( \bG^{\H}\bG  \bPhi_{k}^{\X}\tilde{\bR}_{\mathrm{RIS}}^{\X}).\label{int2}
	\end{align}
	\begin{figure*}
		\begin{align}
			\!\pdv{{P}_{\mathrm{c}}^{k} }{\bs_{k,l}^{*}}\!\!&=\!\frac{\pdv{S_{k}}{\bs_{k,l}^{*}}I_{k}-S_{k}\pdv{I_{k}}{\bs_{k,l}^{*}}}{\gamma_{k}^{2}I_{k}^{2}}\sum^{L}_{n=1} \!\binom{L}{n}\! \left( -1 \right)^{n+1} n \eta T \mathrm{e}^{ -n \eta \frac{T}{\gamma_{k}}}.\label{lemm1}
		\end{align}
		\line(1,0){515}
	\end{figure*}
\end{lemma}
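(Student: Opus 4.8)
The plan is to exploit that the coverage probability is an explicit finite sum whose only phase dependence enters through $\gamma_k$, so that the whole derivative follows from the chain rule. I would first write \eqref{PC} in its binomial-expanded form $P_{\mathrm{c}}^{k}=\sum_{n=1}^{L}\binom{L}{n}(-1)^{n+1}\mathrm{e}^{-n\eta T/\gamma_k}$, so that the $n$ appearing in the exponent of \eqref{lemm1} is explicit, and differentiate term by term. For a single summand the Wirtinger derivative is $\pdv{\mathrm{e}^{-n\eta T/\gamma_k}}{\bs_{k,l}^{*}}=\frac{n\eta T}{\gamma_k^{2}}\,\mathrm{e}^{-n\eta T/\gamma_k}\,\pdv{\gamma_k}{\bs_{k,l}^{*}}$, where the factor $1/\gamma_k^{2}$ comes from $\partial(1/\gamma_k)=-\gamma_k^{-2}\,\partial\gamma_k$ and the two sign changes cancel. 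Factoring the common vector $\pdv{\gamma_k}{\bs_{k,l}^{*}}$ and the scalar $1/\gamma_k^{2}$ out of the sum already reproduces the prefactor structure of \eqref{lemm1}, leaving only $\pdv{\gamma_k}{\bs_{k,l}^{*}}$ to be evaluated.

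Next I would set $\gamma_k=S_k/I_k$ with $S_k$ and $I_k$ as in the statement and apply the quotient rule, giving $\pdv{\gamma_k}{\bs_{k,l}^{*}}=\bigl(I_k\,\pdv{S_k}{\bs_{k,l}^{*}}-S_k\,\pdv{I_k}{\bs_{k,l}^{*}}\bigr)/I_k^{2}$. Combined with the $1/\gamma_k^{2}$ prefactor from the first step, this yields exactly the denominator $\gamma_k^{2}I_k^{2}$ and numerator of \eqref{lemm1}. The whole problem is thus reduced to the two vector derivatives $\pdv{S_k}{\bs_{k,l}^{*}}$ and $\pdv{I_k}{\bs_{k,l}^{*}}$.

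These are the substantive computations. Recall $\bar{\bR}_k=\beta_k\bG\bPhi_k^{\X}\tilde{\bR}_{\mathrm{RIS}}^{\X}\bPhi_k^{\H,\X}\bG^{\H}$, where $\bs_{k,l}$ collects the diagonal entries of $\bPhi_k^{\X}$ treated as complex variables, consistent with the projection \eqref{sol2}. For $S_k=\tr^{2}(\bar{\bR}_k)$ the chain rule gives $\pdv{S_k}{\bs_{k,l}^{*}}=2\tr(\bar{\bR}_k)\,\pdv{\tr(\bar{\bR}_k)}{\bs_{k,l}^{*}}$, and for $I_k$ linearity splits the work into $\pdv{\tr(\bar{\bR}_k^{2})}{\bs_{k,l}^{*}}$ and $\sigma_0\,\pdv{\tr(\bar{\bR}_k)}{\bs_{k,l}^{*}}$. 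The key tool is a single matrix-differential identity: for a trace of the form $\tr(\bA\bPhi_k^{\X}\bB\bPhi_k^{\H,\X})$, retaining only the part that depends on the conjugate factor $\bPhi_k^{\H,\X}$ and using $\tr(\bX\,d\bPhi_k^{\H,\X})=\diag(\bX)^{\T}d\bs_{k,l}^{*}$, the gradient is read off as $\diag(\bX)$. With $\bX=\bG^{\H}\bG\bPhi_k^{\X}\tilde{\bR}_{\mathrm{RIS}}^{\X}$ this produces $\pdv{\tr(\bar{\bR}_k)}{\bs_{k,l}^{*}}=\beta_k\diag(\bG^{\H}\bG\bPhi_k^{\X}\tilde{\bR}_{\mathrm{RIS}}^{\X})$, hence the stated $\pdv{S_k}{\bs_{k,l}^{*}}$. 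For the quadratic term I would use $d\tr(\bar{\bR}_k^{2})=2\tr(\bar{\bR}_k\,d\bar{\bR}_k)$, insert the conjugate part $d\bar{\bR}_k=\beta_k\bG\bPhi_k^{\X}\tilde{\bR}_{\mathrm{RIS}}^{\X}\,d\bPhi_k^{\H,\X}\,\bG^{\H}$, and cycle the trace so the differential sits on the right, which gives the diagonal of $\beta_k\bG^{\H}\bar{\bR}_k\bG\bPhi_k^{\X}\tilde{\bR}_{\mathrm{RIS}}^{\X}$; adding $\sigma_0$ times the linear term then assembles $\pdv{I_k}{\bs_{k,l}^{*}}$ as in \eqref{int2}.

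The main obstacle is the complex matrix calculus rather than any analytic difficulty. One must (i) apply the Wirtinger rule so that $\bPhi_k^{\X}$ is held fixed while only its Hermitian conjugate $\bPhi_k^{\H,\X}$ is differentiated; (ii) keep the cyclic-trace bookkeeping straight, in particular the combinatorial factor arising from the two conjugate occurrences of $\bPhi_k^{\H,\X}$ inside $\tr(\bar{\bR}_k^{2})$; and (iii) convert each matrix differential into a diagonal gradient vector via the $\diag$ identity above. The amplitude and phase feasibility constraints do not enter the gradient itself, since they are imposed separately through the projection step \eqref{sol2}; the quantity computed here is therefore the plain Euclidean conjugate gradient that serves as the ascent direction $\bv_{k,l}$.
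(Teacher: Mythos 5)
Your proposal is correct and follows essentially the same path as the paper's Appendix~C: the chain rule applied term by term to the binomial sum (equivalent to the paper's factorization $\pdv{{P}_{\mathrm{c}}^{k}}{\gamma_{k}}\pdv{\gamma_{k}}{\bs_{k,l}^{*}}$), the quotient rule on $\gamma_{k}=S_{k}/I_{k}$, and the same trace-to-$\diag$ identity to extract the gradients of $\tr(\bar{\bR}_{k})$ and $\tr(\bar{\bR}_{k}^{2})$. The only point worth noting is that your differential $d\tr(\bar{\bR}_{k}^{2})=2\tr(\bar{\bR}_{k}\,d\bar{\bR}_{k})$ carries a factor $2$ that then silently disappears from your stated gradient, which exactly mirrors the paper's own inconsistency between its intermediate step \eqref{interference0} and the final expression \eqref{int2}.
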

\begin{proof}
	See Appendix~\ref{Derivative}.
\end{proof}

\section{Numerical Results}\label{Numerical} 
In this section, we provide the numerical results for the coverage of the STAR-RIS assisted mMIMO system. We consider a uniform planar array (UPA) of $ N=60 $ elements for the RIS, while a uniform linear array (ULA) with  $ M=40 $ antennas is assumed for the BS that serves UEs $ t $ and $ r $. Unless otherwise stated, we consider the following values. The path-losses are generated according to  the NLOS  version of the 3GPP Urban Micro (UMi) scenario from TR36.814 for a carrier frequency of $ 2.5 $ GHz, and noise level $ -80 $ dBm \cite{3GPP2017}. Specifically, we have $ \beta_{g}=C_{g} d_{g}^{-\nu_{g}},  $ and $ \beta_{k}=C_{k} d_{k}^{-\nu_{k}},~k\in \{t,r\}  $	with $ C_{g}=26 $ dB, $ C_{t}=26 $ dB, $ C_{r}=28 $ dB, $ \nu_{g} =2.1$, $ \nu_{t} =2.5$, and $ \nu_{r} =2.2$. The variables $ d_{g}$ and $ d_{k}$ denote the distances between the BS and RIS and between the RIS and UE $ k\in\{t,r\} $, respectively. We 
assume that the correlation matrix for the  RIS is given by \cite{Bjoernson2020}, where the horizontal and vertical dimension of each element is $d_\mathrm{H}= \lambda/8$ and $d_\mathrm{V}= \lambda/8$. Also,  we assume that $ \beta_{n}^{t}=0.4 $, $ \beta_{n}^{r} =0.6, \forall n$, $ \rho_{\mathrm{dl}}= 6~\mathrm{dB} $, and $ N_{0} =-174+10\log_{10}B_{\mathrm{c}}$ with $ B_{\mathrm{c}}=200~\mathrm{KHz} $. 
Monte-Carlo (MC) simulations are carried out to verify our analysis.

In Fig. \ref{Fig1}, we depict the coverage probability versus the target rate for varying number of RIS elements $ N $ while studying other key properties. We observe that a higher number of $ N $, increases $ {P}_{\mathrm{c}}^{k} $ as expected. Moreover, we study the impact of phase errors for $ N =25$, and we see that, in the case of Uniform PDF, where no knowledge on the errors is known, the RIS cannot be optimized, since the covariance matrix does not depend on the phase shifts. Thus, the coverage is lower. However, if the phase errors follow the Von Mises PDF ($ m =0.5$), $ {P}_{\mathrm{c}}^{k} $ is higher thanks to the possibility of performing phase-optimization under such phase-noise distribution. Also, we make a similar observation regarding the correlation for $ N=100 $, i.e., we notice that a RIS correlation allows the PBM optimization, and $ {P}_{\mathrm{c}}^{k} $ increases  in the case of statistical CSI modeling. In addition, the ES scheme achieves higher coverage than the MS scheme because the latter is a special case of ES. Furthermore, we have added the impact of the phase error for both STAR-RIS and conventional RIS for $ m=0.5 $ and $ m=0.7 $, and we observe that a larger $ m $ increases the estimation accuracy, i.e., the coverage is higher  in both cases, while the coverage for the STR-RIS is generally higher compared to the conventional RIS as expected.

Fig. \ref{Fig2} illustrates the  coverage probability versus the target rate for varying number of BS antennas  $ M$, i.e., $ M=30,60,90 $. It is shown that $ {P}_{\mathrm{c}}^{k} $ increases with an increasing number of BS antennas because of a higher beamforming gain.  Also, we notice that independent Rayleigh fading conditions  result in the worst coverage because no RIS exploitation can be achieved in the case of statistical CSI. In this direction, we observe that correlation should be taken into account to benefit from the RIS, while a higher correlation leads to lower coverage. For the sake of comparison, in the same figure, we have included the coverage in the case of a conventional RIS as presented in Sec. \ref{coverage1} with $ N=30 $ elements. Notably, the STAR-RIS system outperforms the conventional RIS-assisted system because more degrees of freedom for transmission and reflection can be harnessed.

\begin{figure}[!h]
	\begin{center}
		\includegraphics[width=0.85\linewidth]{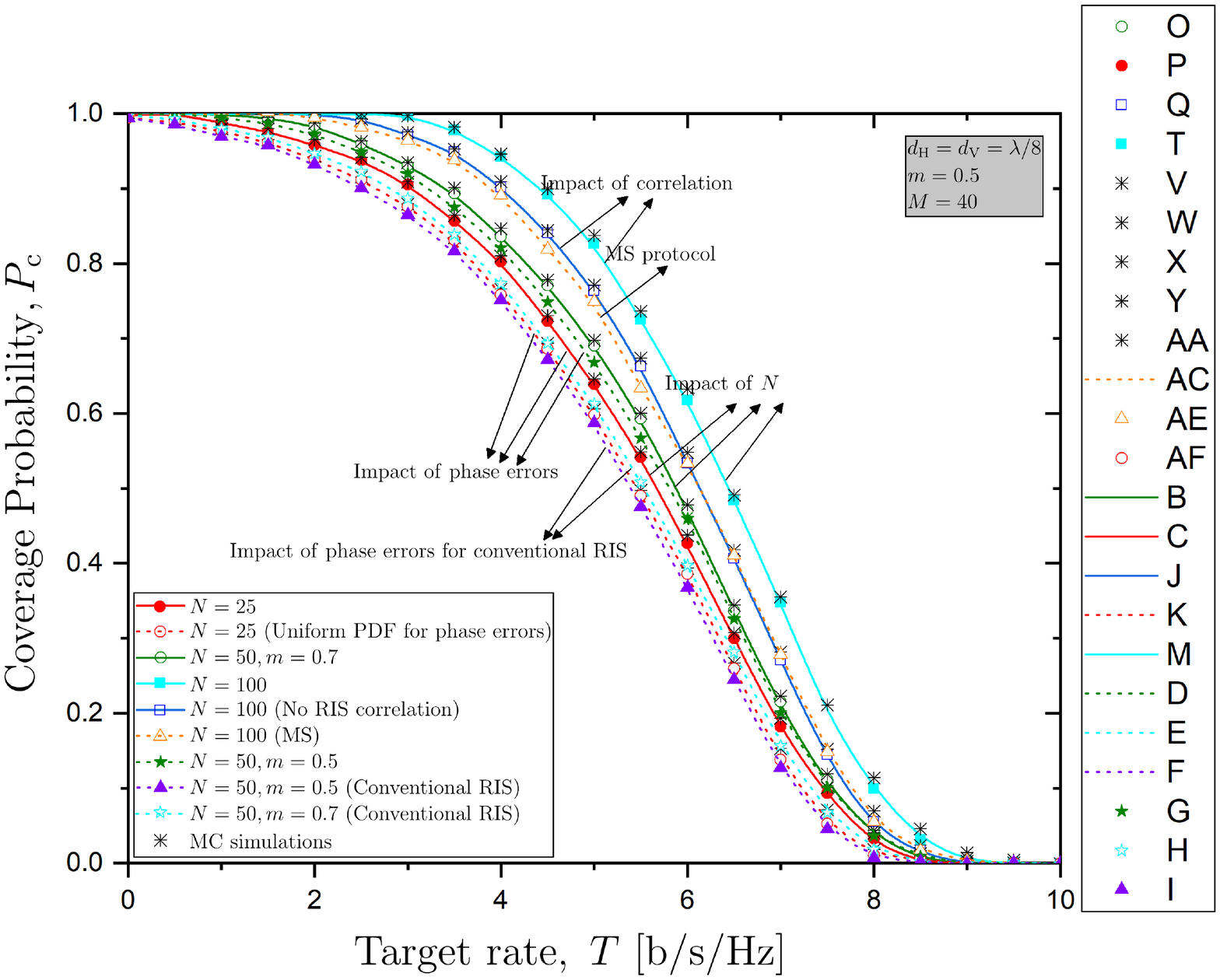}
		\caption{\footnotesize{Coverage probability of a STAR-RIS assisted mMIMO system versus the target rate $ T $ for varying $N$  ($ M=40 $, $ \mathrm{d_{\mathrm{H}}=d_{\mathrm{V}}=\lambda/8} $, $ m=0.5 $). }}
		\label{Fig1}
	\end{center}
\end{figure} 	

\begin{figure}[!h]
	\begin{center}
		\includegraphics[width=0.85\linewidth]{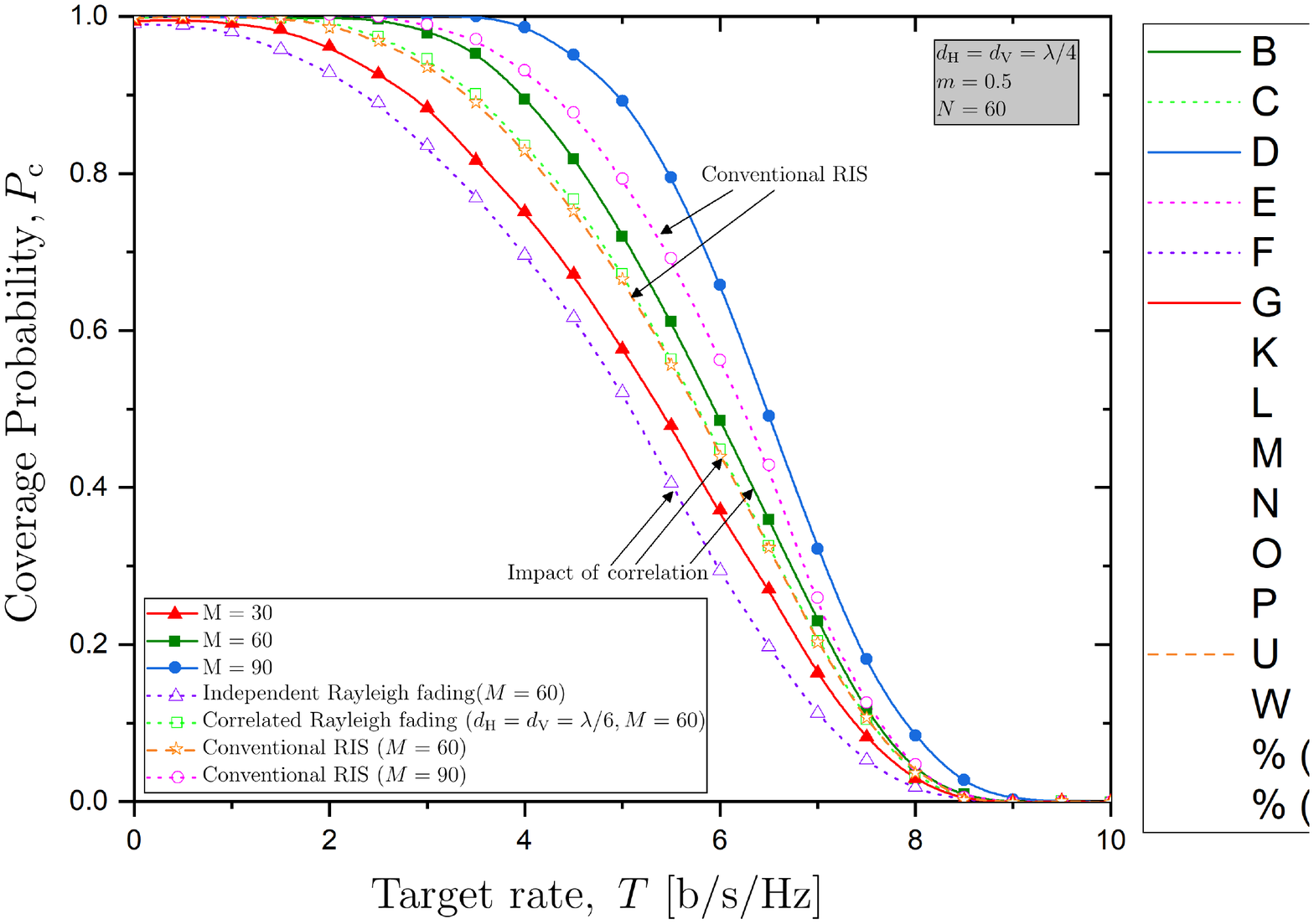}
		\caption{\footnotesize{Coverage probability of a STAR-RIS assisted mMIMO system versus the target rate $ T $ for varying $M$  ($ N=60 $, $ \mathrm{d_{\mathrm{H}}=d_{\mathrm{V}}=\lambda/8} $, $ m=0.5 $). }}
		\label{Fig2}
	\end{center}
\end{figure} 	

\section{Conclusion} \label{Conclusion} 
In this paper, we obtained the coverage probability of a STAR-RIS assisted mMIMO system under the realistic assumptions of correlated Rayleigh fading and phase-shift errors, while previous works on STAR-RIS did not account for these inevitable effects. Especially, we derived the coverage probability for both $ t $ and $ r $ UEs in terms of large-scale statistics that change at every several coherence intervals, and thus, reduce the increased overhead of STAR-RIS. Among others, we depicted how the fundamental parameters affect the coverage, and the outperformance of STAR-RIS compared to reflective-only RIS. Future works could consider Ricean channels or even millimetre-wave transmission. 
\begin{appendices}
	\section{Proof of Proposition~\ref{proposition:SNR}}\label{DESNR}
	Based on the property $ \bx^{\H}\by=\tr(\by \bx^{\H}) $, the term in the numerator of \eqref{DE_SNR} is written as 
	\begin{align}
		\!\!\!|\EE[\bar{\bh}_{k}^{\H}\bff_{k}]|^{2}&
		\!=\!\Big|\!\tr(\!\frac{\EE[\bar{\bh}_{k}\bar{\bh}_{k}^{\H}]}{\EE[\|\bar{\bh}_{k}\|^{2}]}\!)\Big|^{2}\label{term2}\!\\
		&\!=\!\tr( \bar{\bR}_{k}),\label{term4}
	\end{align}
	where, in \eqref{term2}, we have substituted the precoder $ \bff_{k} $.
	Note that $ \bar{\bR}_{k}= \beta_{k}\bG  \bPhi_{k}^{\X}\tilde{\bR}_{\mathrm{RIS}}^{\X} \bPhi_{k}^{\H,\X}\bG^{\H}$.
	
	Regarding the first term in the denominator, we have 
	\begin{align}
		\!\!\!\!\EE[\big|\bar{\bh}_{k}^{\H}\bff_{k}
		\big|^{2}]&\!=\!\tr( \bar{\bR}_{k})+\frac{\tr( \bar{\bR}_{k}^{2})}{\tr( \bar{\bR}_{k})},\label{term1}
	\end{align}
	where \eqref{term1} is obtained by  \cite[Lem.~B.14]{Bjoernson2017}.
		The proof is concluded by appropriate substitutions 
		into \eqref{SNR}.
	\end{appendices}
	
	\section{Proof of Proposition~\ref{coverage}}\label{coverageProof}
	The proof follows similar lines with the technique in \cite{Bai2015}. Specifically, based on the definition of the coverage probability of UE $ k\in\{t,r\} $, we have
	\begin{align}
		{P}_{\mathrm{c}}^{k}&\approx \mathbb{P}(\tilde{g}>\frac{T}{\gamma_{k}})\label{term6}\\
		&\approx 1-(1-e^{-\eta \frac{T}{\gamma_{k}}})^{L}\label{term7}\\
		&=\sum_{n=1}^{L}\binom{L}{n}(-1)^{n+1}e^{-\eta \frac{T}{\gamma_{k}}}\label{term8}.
	\end{align}
	
	In \eqref{term6}, we have used the approximation \cite{Bai2015}, where the constant number $ 1 $  can be replaced by a dummy Gamma variable $ \tilde{g}		 $, which has a unit mean and shape parameter $ L $,  i.e., its pdf is $ \frac{L^{-L}}{\Gamma(L)} \tilde{g}^{L-1}e^{-{L \tilde{g}}{}}$. The tightness of this approximation increases with increasing $L $. In \eqref{term7}, we have applied  Alzer's inequality, where $\eta=L\left(L! \right)^{-\frac{1}{L}}$ \cite{Bai2015}, while in \eqref{term8}, we have applied the Binomial theorem.
	\section{Proof of Lemma~\ref{deriv1}}\label{Derivative}
	First, we apply the  chain rule, which gives
	\begin{align}
		\pdv{{P}_{\mathrm{c}}^{k}}{\bs_{k,l}^{*}}&=\pdv{{P}_{\mathrm{c}}^{k}}{\gamma_{k}}\pdv{\gamma_{k}}{\bs_{k,l}^{*}}.\label{deriv2}
	\end{align}
	Regarding the first term in \eqref{deriv2}, we have
	\begin{align}
		\pdv{{P}_{\mathrm{c}}^{k}}{\gamma_{k}}&=\sum^{L}_{n=1} \!\binom{L}{n}\! \frac{\left( -1 \right)^{n+1} n \eta T}{\gamma_{k}^{2}} \mathrm{e}^{ -n \eta \frac{T}{\gamma_{k}}}.\label{deriv3}
	\end{align}
	For the derivation of the second derivative in \eqref{deriv2}, i.e., $ \pdv{\gamma_{k}}{\bs_{k,l}^{*}}  $,   we define $ \gamma_{k}= \frac{S_{k}}{I_{k}}$, where $ S_{k}=\tr^{2}( \bar{\bR}_{k}) $ and $ I_{k}=\tr( \bar{\bR}_{k}^{2})+{\sigma_{0}}\tr( \bar{\bR}_{k}) $. Hence, the derivative can be written as
	\begin{align}
		\pdv{\gamma_{k}}{\bs_{k,l}^{*}}=\frac{\pdv{S_{k}}{\bs_{k,l}^{*}}I_{k}-S_{k}\pdv{I_{k}}{\bs_{k,l}^{*}}}{I_{k}^{2}}.\label{gam1}
	\end{align}
	The computation of $ \pdv{S_{k}}{\bs_{k,l}^{*}} $ follows. In particular, we have 
	\begin{align}
		\pdv{S_{k}}{\bs_{k,l}^{*}}=2\tr( \bar{\bR}_{k})\tr( \bar{\bR}_{k}'),\label{numerator1}
	\end{align}
	where $ (\cdot)' $ denotes the partial derivative with respect to $ \bs_{k,l}^{*}$. Based on the property  $ \tr\left(\bA \diag(\bs_{k,l}^{*})\right)=\left(\diag(A)\right)^{\T}\bs_{k,l}^{*} $, where $ \bA $ is a matrix independent of  $ \bs_{k,l}^{*}$ we can easily show 
	\begin{align}
		\tr( \bar{\bR}_{k}')=\beta_{k}\diag(\bG^{\H}\bG  \bPhi_{k}^{\X}\tilde{\bR}_{\mathrm{RIS}}^{\X} ).\label{tracederiv}
	\end{align}
	Regarding $ \pdv{I_{k}}{\bs_{k,l}^{*}} $, we obtain
	\begin{align}
		\!\!\!\pdv{I_{k}}{\bs_{k,l}^{*}}&\!=\!2\tr( \bar{\bR}_{k}'\bar{\bR}_{k})\!+\!{\sigma_{0}}{}\!\tr( \bar{\bR}_{k}')\!.\!\label{interference0}
	\end{align}
	The derivative in \eqref{int2}, is obtained by \eqref{interference0} after exploiting the property used in \eqref{tracederiv}.
	Substitution of the derivatives in \eqref{gam1} concludes the proof.
	\bibliographystyle{IEEEtran}
	
	\bibliography{IEEEabrv,mybib}
\end{document}